\documentclass{article}
\bibliographystyle{plain}
\usepackage{array}
\usepackage{graphicx}
\usepackage{float}
\usepackage{enumitem}
\setlist{nolistsep}

\newcounter{customenvs}[section]
\renewcommand{\thecustomenvs}{\thesection.\arabic{customenvs}}

\newcommand{\customenvbegin}[1]{\medskip\refstepcounter{customenvs}\noindent\textbf{#1~\thecustomenvs.~}}
\newcommand{\customenvend}{\medskip}

\newcommand{\namedcustomenvbegin}[2]{\medskip\refstepcounter{customenvs}\noindent\textbf{#1~\thecustomenvs~(#2).~}}

\newenvironment{definition}[1]{\namedcustomenvbegin{Definition}{#1}}{\customenvend}
\newenvironment{refquote}[1]{\namedcustomenvbegin{Quote}{#1}}{\customenvend}

\newenvironment{lemma}{\customenvbegin{Lemma}}{\customenvend}
\newenvironment{corollary}{\customenvbegin{Corollary}}{\customenvend}
\newenvironment{theorem}{\customenvbegin{Theorem}}{\customenvend}

\newenvironment{example}{\renewcommand{\plotbegin}{\begin{figure}[ht]}\customenvbegin{Example}}{\customenvend}

\newenvironment{proof}{\medskip\noindent\textbf{Proof.~}}{\medskip}

\floatstyle{ruled}
\newfloat{algfloat}{t}{algfloattmp}
\floatname{algfloat}{Algorithm}

\newenvironment{algorithm}[2]{\begin{algfloat}\caption{#1}\medskip{#2}\renewcommand{\theenumi}{\arabic{enumi}}\renewcommand{\labelenumi}{\theenumi}\begin{enumerate}}{\end{enumerate}\medskip\end{algfloat}}

\newcommand{\subroutine}[1]{\end{enumerate}\medskip\textbf{subroutine~}{#1}\renewcommand{\theenumi}{\Roman{enumi}}\renewcommand{\labelenumi}{\theenumi}\begin{enumerate}}

\newcommand{\keywfont}[1]{{\tt {#1}}}


\newcommand{\mrotate}{\keywfont{modelRotate}}


\newcommand{\mrotatefun}[2]{\mrotate({#1},{#2})}

\newcommand{\flip}{\keywfont{rotate}}
\newcommand{\flipfun}[2]{\flip({#1},{#2})}

\newcommand{\algtab}{\quad~}
\newcommand{\keyword}[1]{\keywfont{#1}}

\newcommand{\ifthen}[1]{\keyword{if~}{#1}\keyword{~then}}

\newcommand{\fordo}[1]{\keyword{for~}{#1}\keyword{~do}}

\newcommand{\sat}{\textbf{satisfiable}}

\newcommand{\return}{\keyword{return}}
\newcommand{\elsekw}{\keyword{else}}

\newcommand{\while}{\keyword{while}}

\newcommand{\pick}{\keyword{pick}}

\newcommand{\plotbegin}{\begin{figure}[t]}
\newcommand{\plotend}{\end{figure}}


\newcommand{\plotcenter}[4]{\plotbegin\begin{center}\includegraphics[angle=#4]{#1/#2}\caption{#3}\label{fig:#2}\end{center}\plotend}



\newcommand{\F}{\mathcal{F}}

\newcommand{\true}{\textbf{true}}
\newcommand{\false}{\textbf{false}}


\newcommand{\assoc}{assoc}
\newcommand{\assocs}{assocs}
\newcommand{\posedges}{E_P} 
\newcommand{\posgraph}[1][\F]{G=(\F,\posedges)}
\newcommand{\flipedges}{E_G} 
\newcommand{\flipgraph}[1][\F]{G=(\F,\flipedges)}

\newcommand{\citecp}{\cite{DBLP:conf/cp/Wieringa12}}

\author{Siert Wieringa\thanks{The work was financially supported by the Academy of Finland, project 139402} \\\medskip \small{Aalto University, Finland} \\\medskip \small{\texttt{siert.wieringa@aalto.fi}}}
\title{Some notes on model rotation}

\begin{document}
\maketitle

\begin{abstract}
Model rotation is an efficient technique for improving MUS finding algorithms.
In previous work we have studied model rotation as an algorithm that traverses a graph which is induced by the input formula.
This document introduces the notion of blocked edges, which are edges in this graph that can never be traversed.
We show the existence of irredundant CNF formulas in which some clauses are unreachable by model rotation.
Additionally, we prove a conjecture by Belov, Lynce and Marques-Silva.
\end{abstract}

\section{Definitions}

A literal $l$ is a Boolean variable $l=x$ or its negation $l=\lnot x$.
For any literal $l$ it holds that $\lnot \lnot l = l$.
A clause $c=\{l_1, l_2, \cdots, l_{|c|}\}$ is a non-empty set of literals, representing the disjunction $l_1 \vee l_2 \vee \cdots \vee l_{|c|}$.
A propositional logic formula $\F$ is in \emph{Conjunctive Normal Form} (CNF) if it is a conjunction of disjunctions, i.e. a set of clauses.
Throughout this document the word formula always refers to a formula in CNF.

An assignment $a$ is a set of literals such that if $l \in a$ then $\lnot l \notin a$.
If $l \in a$ then it is said that literal $l$ is assigned the value \true,~if $\lnot l \in a$ then $l$ it is said that $l$ assigned value \false.
Assignment $a$ satisfies clause $c$ if there exists a literal $l \in a$ such that $l \in c$.
An assignment satisfies a formula if it satisfies all clauses in the formula.
An assignment $a$ is a \emph{complete assignment} for a formula $\F$ if for all $c \in \F$ and all $l \in c$ either $l \in a$ or $\lnot l \in a$.
A formula $\F$ is \emph{equivalent} to a formula $\F'$, denoted $\F \equiv \F'$, if for all assignments $a$ it holds that $a$ satisfies $\F$ if and only if $a$ satisfies $\F'$.

A formula that has no satisfying assignments is called unsatisfiable.
A formula $\F$ is \emph{minimal unsatisfiable} if it is unsatisfiable and any subformula $\F' \subset \F$ is satisfiable.
A \emph{minimal unsatisfiable subset}\footnote{Alternative names are \emph{minimal unsatisfiable subformula} or \emph{minimal unsatisfiable core} (MUC).} (MUS) of a formula $\F$ is a formula $\F' \subseteq \F$ that is minimal unsatisfiable.
The concept can be generalized to include satisfiable formulas by considering \emph{minimal equivalent subsets} (MESes) \cite{DBLP:conf/cp/BelovJLM12} instead.
A MES $\F' \subseteq \F$ is a formula such that $\F' \equiv \F$ and for all $c \in \F'$ it holds that $\F' \setminus \{c\} \not \equiv \F$ (i.e.\ $\F' \setminus \{c\} \not \models c$).

\begin{definition}{\assoc}
An associated assignment (\assoc) for a clause $c \in \F$ is a complete assignment $a$ for the formula $\F$ that satisfies the formula $\F \setminus \{c\}$ and does not satisfy $c$. Let $A(c,\F)$ be the set of all \assocs~for $c \in \F$.
\end{definition}

Note that iff a clause $c \in \F$ has an \assoc~(i.e. if $A(c,\F) \neq \emptyset$) then $c$ occurs in every MES of $\F$.
Such clauses are often referred to as \emph{critical clauses} or \emph{transition clauses} (e.g \cite{DBLP:conf/sat/SilvaL11}).
In this work we will refer to such clauses as critical clauses, or more explicitly as \emph{clauses for which an \assoc~exists}.

In \cite{DBLP:conf/sat/SilvaL11} a technique called \emph{model rotation} was introduced,
which shortly after was improved to \emph{recursive model rotation} \cite{DBLP:conf/fmcad/BelovM11}.
Model rotation is an algorithm that given an \assoc~for a clause attempts to find \assocs~for other clauses by negating a single literal.

\begin{definition}{Rotation function}Let $\flipfun{a}{l}$ be a function that negates literal $l$ in assignment $a$, i.e.:
$\flipfun{a}{l} = \left ( a \setminus \{ l \} \right ) \cup \{ \lnot l \}$
\end{definition}

The pseudocode for a basic destructive algorithm extended with model rotation is shown in Alg.~\ref{alg:modelrotate}.
The model rotation subroutine can be thought of as an algorithm that traverses a graph, which we call the \emph{flip graph} \citecp.

\newcommand{\Fexample}{\F_{fig\ref{fig:flipgraph}}}
\plotcenter{.}{flipgraph}{The \emph{flip graph} for $\Fexample = \{ \{ x \}, \{\lnot x, y \}, \{ \lnot x, z \}, \{\lnot y, \lnot z\} \}$}{0}

\begin{algorithm}{Destructive algorithm with recursive model rotation}{Given an unsatisfiable formula $\F$:}\label{alg:modelrotate}
  \item $M=\emptyset$
  \item \while~$\F \neq M$
  \item \algtab \pick~a clause $c \in \F \setminus M$
  \item \algtab \ifthen{$\F \setminus \{c\}$ is \sat}
  \item \algtab \algtab $a=$~an assignment satisfying $\F \setminus \{c\}$
  \item \algtab \algtab $M=M\cup\{c\}$
  \item \algtab \algtab $\mrotatefun{c}{a}$
  \item \algtab \elsekw~$\F=\F \setminus \{c\}$
  \item \return~$\F$
  \subroutine{$\mrotatefun{\mbox{\tt clause~} c}{\mbox{~\tt assignment~} a}$}  
  \item \fordo{$l \in c$}
  \item \algtab $a'=\flipfun{a}{\lnot l}$
  \item \algtab \ifthen{exactly one clause $c' \in \F$ is not satisfied by $a'$ and $c' \notin M$}
  \item \algtab\algtab $M=M\cup\{c'\}$
  \item \algtab\algtab $\mrotatefun{c'}{a'}$
\end{algorithm}

\begin{definition}{Flip graph}For a formula $\F$ the flip graph $G=(V,E)$ is a graph which has a vertex for every clause, i.e. $V=\F$.
Each edge $(c_i,c_j) \in E$ is labelled with the set of literals $L(c_i,c_j)$ such that:
\[
L(c_i,c_j)\quad=\quad \{ l \mid l \in c_i \mbox{~and~} \lnot l \in c_j \}
\]
The set of edges $E$ of the flip graph is defined by $(c_i,c_j) \in E$ iff $L(c_i,c_j)\neq \emptyset$
\end{definition}

Even though $(c_i,c_j) \in E$ iff $(c_j,c_i) \in E$ in this work the flip graph is considered to be a directed graph.
This is useful for defining the \emph{rotation edges}.

\begin{definition}{Rotation edges}Given a formula $\F$, let the sets of possible rotation edges\footnote{Note that the set $\posedges$ corresponds to all pairs of clauses $(c_i,c_j)$ on which resolution $c_i \otimes c_j$ can be performed without creating a tautology.} $\posedges$, and guaranteed rotation edges $\flipedges$ be defined as:
\[
\begin{array}{llll}
  \posedges  &=& \{ (c_i,c_j) \mid& c_i,c_j \in \F \mbox{~and~} |L(c_i,c_j)| = 1 \}\smallskip \\
  \flipedges &=& \{ (c_i,c_j) \mid& c_i,c_j \in \F \mbox{~and~} |L(c_i,c_j)| = 1 \mbox{~and for all~} c_k \in \F \\
  &&& \mbox{it holds that~} L(c_i,c_j) \neq L(c_i,c_k) \mbox{~if~} c_k \neq c_j \}
\end{array}
\]
\end{definition}

In Fig.~\ref{fig:flipgraph}~the flip graph for an example formula $\Fexample$ is given.
Because there are no two clauses $c_i,c_j \in \Fexample$ such that $|L(c_i,c_j)|>1$ it holds that the set of possible rotation edges $\posedges$ is equal to the set of all edges $E$ in the flip graph.
However, only the solid edges in the figure belong to the set of guaranteed rotation edges $\flipedges$.
The dotted edges are not in the set $\flipedges$ because the two outgoing edges from vertex $c_1$ have the same label $L(c_1,c_2)=L(c_1,c_3)=\{x\}$.
In \citecp~we prove the following theorem:

\begin{theorem}
\label{th:flipedges}
Let $\F$ be an unsatisfiable formula and $\flipedges$ the set of guaranteed rotation edges it induces.
If $(c_i,c_j) \in \flipedges$ then for any \assoc~$a_i \in A(c_i,\F)$ an assignment $a_j= \flipfun{a_i}{\lnot l}$ such that $L(c_i,c_j)=\{l\}$ is an \assoc~$a_j \in A(c_j,\F)$.
\end{theorem}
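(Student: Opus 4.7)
My plan is to verify the three defining properties of an $\assoc$ for $c_j$ on the assignment $a_j$. Because $l \in c_i$ (by definition of $L(c_i,c_j)=\{l\}$) and $a_i$ does not satisfy $c_i$, we have $\lnot l \in a_i$, so $a_j = \flipfun{a_i}{\lnot l}$ differs from $a_i$ only in that $\lnot l$ is replaced by $l$; in particular $a_j$ assigns the same variables as $a_i$ and is therefore a complete assignment for $\F$.

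The main work is to show that $a_j$ satisfies $\F \setminus \{c_j\}$, which I would do clause by clause. The clause $c_i$ itself is satisfied because $l \in c_i \cap a_j$. For any other $c \in \F \setminus \{c_i,c_j\}$ I would argue by contradiction. If $a_j$ did not satisfy $c$, then since $a_i$ satisfies $c$ and the flip only removes $\lnot l$ from the assignment, the literal $\lnot l$ would have to be the unique literal of $c$ present in $a_i$; in particular $\lnot l \in c$, giving $l \in L(c_i,c)$. Now for every other $m' \in c_i \setminus \{l\}$, the fact that $a_i$ falsifies $c_i$ gives $\lnot m' \in a_i$, and if we additionally had $m' \in L(c_i,c)$ (i.e.\ $\lnot m' \in c$) then $\lnot m'$ would be a second literal of $c$ satisfied by $a_i$, contradicting uniqueness. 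Hence $L(c_i,c) = \{l\} = L(c_i,c_j)$, which contradicts $(c_i,c_j) \in \flipedges$ since $c \neq c_j$.

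The remaining property, that $a_j$ does not satisfy $c_j$, is where the global hypothesis that $\F$ is unsatisfiable comes in, and I expect this to be the subtle step. A direct argument trying to establish that $\lnot l$ is the unique literal of $c_j$ satisfied by $a_i$ does not seem to go through, because the $\flipedges$ condition only constrains edges emanating from $c_i$ and says nothing about the literals of $c_j$ that do not appear negated in $c_i$; an $a_i$ with some ``extra'' satisfier of $c_j$ cannot be ruled out by a local inspection of $c_j$ alone. Instead, I would argue indirectly: having already shown that $a_j$ is a complete assignment satisfying $\F \setminus \{c_j\}$, if $a_j$ also satisfied $c_j$ then $a_j$ would be a model of $\F$, contradicting unsatisfiability. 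Therefore $a_j$ does not satisfy $c_j$, and $a_j \in A(c_j,\F)$ as required.
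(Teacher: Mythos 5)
Your proof is correct and complete. Note, however, that the paper does not actually prove Theorem~\ref{th:flipedges} itself --- it states ``In \citecp~we prove the following theorem'' and defers entirely to that earlier work --- so there is no in-paper proof to compare against. Judged on its own merits, your argument verifies all three properties of an \assoc~cleanly: the completeness of $a_j$ follows since $\lnot l \in a_i$ (because $a_i$ falsifies $c_i \ni l$); the satisfaction of every $c \in \F\setminus\{c_i,c_j\}$ correctly isolates the only dangerous case ($c \cap a_i = \{\lnot l\}$ and $l \notin c$) and shows it forces $L(c_i,c)=\{l\}=L(c_i,c_j)$, contradicting the uniqueness clause in the definition of $\flipedges$; and your observation that the final property ($a_j$ falsifies $c_j$) cannot be obtained by local inspection but must come from the global unsatisfiability of $\F$ is exactly right --- that is precisely why the theorem is stated for unsatisfiable formulas. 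Your proof pinpoints where each hypothesis is used, which is the essential content of the result.
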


This theorem implies that if we find an \assoc~for a clause then model rotation is guaranteed to find an \assoc~for all clauses that are reachable from that clause over edges in $\flipedges$.
It is shown in \citecp~that typical formulas used for benchmarking MUS finding algorithms contain large numbers of guaranteed rotation edges.
This means that an upperbound can be computed on the minimum number of calls to a SAT solver needed by Alg.~\ref{alg:modelrotate},
which is typically much smaller than the number of clauses in the formula.
We used this observation to argue about the strength of model rotation.

\section{Blocked rotation edges}
In \citecp~we defined a subset of possible rotation edges $\flipedges \subseteq \posedges$ on which rotation is guaranteed to succeed.
Here, we discuss the possible existence of edges in $\posedges$ on which rotation is guaranteed to fail.

\begin{definition}{Blocked rotation edge}An edge $(c_i,c_j) \in \posedges$ is \emph{blocked} if for all $a_i \in A(c_i,\F)$ we have $\flipfun{a_i}{\lnot l} \notin A(c_j,\F)$, where $l$ is the literal such that $L(c_i,c_j)=\{l\}$.
\end{definition}

\begin{corollary}If and only if $(c_i,c_j) \in \posedges$ is a blocked edge then $(c_j,c_i) \in \posedges$ is a blocked edge.\end{corollary}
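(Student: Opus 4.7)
The plan is to observe that the rotation function, restricted to the relevant literal, is an involution, and then use this to show that blockedness is a symmetric relation on $\posedges$.

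First I would note the setup: if $(c_i,c_j)\in\posedges$ with $L(c_i,c_j)=\{l\}$, then $l\in c_i$ and $\lnot l\in c_j$, which by the definition of $L$ gives $L(c_j,c_i)=\{\lnot l\}$, so $(c_j,c_i)\in\posedges$ automatically. Hence both edges are candidates for being blocked simultaneously.

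Next I would prove the key technical fact that $\flipfun{\flipfun{a}{\lnot l}}{l}=a$ whenever $a\in A(c_i,\F)$. Because $a$ fails to satisfy $c_i$ and $l\in c_i$, we have $l\notin a$, and completeness of $a$ forces $\lnot l\in a$; so flipping $\lnot l$ to $l$ and then flipping $l$ back to $\lnot l$ restores $a$. This makes the rotation operation on the pair of clauses symmetric.

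Then I would argue the contrapositive: if $(c_i,c_j)$ is not blocked, then $(c_j,c_i)$ is not blocked. By assumption there is some $a_i\in A(c_i,\F)$ with $a_j:=\flipfun{a_i}{\lnot l}\in A(c_j,\F)$. Taking this $a_j$ as a witness for the reverse edge (whose label is $\{\lnot l\}$), the involution property gives $\flipfun{a_j}{l}=a_i\in A(c_i,\F)$, so $(c_j,c_i)$ is also not blocked. The symmetric implication then follows by swapping the roles of $c_i$ and $c_j$ and of $l$ and $\lnot l$. I do not anticipate a genuine obstacle here; the whole statement is essentially the observation that the single-literal flip is its own inverse, so the only thing that needs care is bookkeeping the labels $\{l\}$ versus $\{\lnot l\}$ on the two oriented edges.
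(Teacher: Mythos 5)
Your proof is correct and is precisely the routine verification the paper leaves implicit (the corollary is stated without proof as an immediate consequence of the definitions): the label of the reverse edge is $\{\lnot l\}$, the single-literal flip is an involution on \assoc s for $c_i$ because $\lnot l \in a_i$ is forced, and blockedness transfers by the contrapositive. No gaps; the bookkeeping of $\{l\}$ versus $\{\lnot l\}$ is exactly the point that needed checking, and you checked it.
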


Naturally, any edge $(c_i,c_j) \in \posedges$ such that either $A(c_i,\F)=\emptyset$ or $A(c_j,\F)=\emptyset$ is a blocked edge.
However, we will show that blocked edges may also exist between two critical clauses.

\begin{lemma}\label{lemma:bidirblocks}If $L(c_i,c_j)=\{l\}$ and for some literal $l' \neq l$ it holds that $\F \setminus \{c_i,c_j\} \models l \leftrightarrow l'$ then the edge $(c_i,c_j) \in \posedges$ is blocked.
\end{lemma}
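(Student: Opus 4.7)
The plan is to unfold the two definitions and chase truth values. Fix an arbitrary $a_i \in A(c_i,\F)$ and set $a' = \flipfun{a_i}{\lnot l}$. To conclude $a' \notin A(c_j,\F)$ it suffices to show $a'$ does not satisfy $\F \setminus \{c_i,c_j\}$, since $\F \setminus \{c_i,c_j\} \subseteq \F \setminus \{c_j\}$. So the target is to exhibit a clause in $\F \setminus \{c_i,c_j\}$ that $a'$ falsifies, and the given equivalence $\F \setminus \{c_i,c_j\} \models l \leftrightarrow l'$ is the natural lever: if I can show $l \leftrightarrow l'$ fails under $a'$, then some clause of $\F \setminus \{c_i,c_j\}$ must be falsified.

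The key observation is that, because $a_i$ is an \assoc~for $c_i$, it falsifies every literal in $c_i$, in particular $l$; hence $\lnot l \in a_i$. Because $a_i$ also satisfies $\F \setminus \{c_i\}$, it satisfies $\F \setminus \{c_i,c_j\}$, and therefore $l \leftrightarrow l'$ holds under $a_i$. Since $l$ is false under $a_i$, so is $l'$. Now $a'$ differs from $a_i$ only in flipping the value of the variable underlying $l$: under $a'$, $l$ becomes true. The crucial step is noting that $l'$ is on a different variable than $l$, so its truth value is unchanged, leaving $l'$ false under $a'$ while $l$ is true; hence $l \leftrightarrow l'$ fails under $a'$, as desired.

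The one subtlety to address — and the main ``obstacle,'' though it is small — is justifying that $l'$ really is on a different variable than $l$. The hypothesis gives only $l' \neq l$, which still permits $l' = \lnot l$. But if $l' = \lnot l$ then $\F \setminus \{c_i,c_j\} \models l \leftrightarrow \lnot l$, i.e.\ $\F \setminus \{c_i,c_j\}$ is unsatisfiable, and then $A(c_i,\F) = \emptyset$ (since any \assoc~for $c_i$ would satisfy the superset $\F \setminus \{c_i\}$), making the blockedness statement vacuously true. So the proof splits into this trivial vacuous case and the main case $\mathrm{var}(l') \neq \mathrm{var}(l)$, where the truth-value chase above goes through. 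Once both cases are recorded, the lemma follows immediately.
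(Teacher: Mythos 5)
Your proof is correct and follows essentially the same route as the paper's: derive $\lnot l \in a_i$ and $\lnot l' \in a_i$ from the definition of an \assoc~and the semantic entailment, then observe that the rotated assignment makes $l$ true while leaving $l'$ false, falsifying some clause of $\F \setminus \{c_i,c_j\}$. Your explicit handling of the degenerate case $l' = \lnot l$ (where $A(c_i,\F)=\emptyset$ and the claim is vacuous) is a small point the paper's proof passes over silently, but it does not change the argument.
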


\begin{proof}
For all $a_i \in A(c_i,\F)$ it holds that $\lnot l \in a_i$ and $a_i$ satisfies $\F \setminus \{c_i,c_j\}$, thus $\lnot l' \in a_i$ holds.
But then any assignment $\flipfun{a_i}{\lnot l}$ contains $l$ and $\lnot l'$ and therefore does not satisfy $\F \setminus \{c_i,c_j\}$.
It follows that no such assignment can be an \assoc~for $c_j$.
\end{proof}

Note that Lemma \ref{lemma:bidirblocks} provides a sufficient condition for blocking the edge between two critical clauses $c_i$ and $c_j$, but that this is not a necessary condition.
For example, the lemma can be generalized by replacing the literal $l'$ with any formula $P$ such that $l$ does not occur in $P$ and $\F \setminus \{c_i,c_j\} \models l \leftrightarrow P$.

An interesting observation is that we can create an irredundant formula $\F$ with a clause $c_i \in \F$ such that for all $c_j \in \F$ all edges $(c_i,c_j) \in \posedges$ are blocked.
This means that for this formula model rotation starting at $c_i$ can never find an \assoc~for any other clause, neither can model rotation starting from any other clause result in an \assoc~for clause $c_i$.

\begin{example}\label{ex:blockedrotation}
Consider the following satisfiable irredundant formula $\F$:
\[
\begin{array}{lll@{\quad\quad}lll}
c_0 &=& x \vee y \\
c_1 &=& a \vee \lnot x & c_2 &=& \lnot a \vee x \\
c_3 &=& b \vee \lnot x & c_4 &=& \lnot b \vee x \\
c_5 &=& c \vee \lnot y & c_6 &=& \lnot c \vee y \\
c_7 &=& d \vee \lnot y & c_8 &=& \lnot d \vee y \\
\end{array}
\]
Note that this formula represents four equivalences $a \leftrightarrow x$, $b \leftrightarrow x$, $c \leftrightarrow y$ and $d \leftrightarrow y$.
Together, these make sure that for all $c \in \F$ it holds that the edge $(c_0,c) \in \posedges$ is blocked.
The formula can be made minimal unsatisfiable without breaking this property, for example by adding one clause for each of the three satisfying assignments of this formula:
\[
\begin{array}{lll}
c_9   &=& a \vee b \vee \lnot c \vee \lnot d \vee  x \vee \lnot y \\
c_{10} &=& \lnot a \vee \lnot b \vee c \vee d \vee \lnot x \vee y \\
c_{11} &=& \lnot a \vee \lnot b \vee \lnot c \vee \lnot d \vee \lnot x \vee \lnot y
\end{array}
\]
\end{example}

\section{Proof of a conjecture by Belov et al.}
In \cite{DBLP:journals/aicom/BelovLM12} a conjecture is presented that we prove here.
The conjecture states a property of the \emph{rotation graph}, which was defined alongside the conjecture.
Here we state an equivalent definition for the rotation graph using slightly different notation.

\begin{definition}{Rotation graph}\label{def:rotationgraph}
Let $\F$ be an unsatisfiable formula, and let $Unsat(\F,a)$ be the set of clauses in $\F$ not satisfied by assignment $a$, i.e. $Unsat(\F,a)=\{ c \mid c \in \F \mbox{~and~} c \cap a = \emptyset \}$.
The rotation graph $\mathcal{R_F}=(V_R,E_R)$ is a directed graph which has a vertex for each complete assignment to the variables of $\F$.
There exists an edge $(a,a') \in E_R$ if $a' = \flipfun{a}{\lnot l}$ for some literal $l \in \bigcup Unsat(\F,a)$.
\end{definition}

A \emph{witness assignment}, as mentioned in the following quote, is exactly the same as an \assoc.

\begin{refquote}{Conjecture found in \cite{DBLP:journals/aicom/BelovLM12}}\label{quote:conjecture}Let $\F$ be a minimally unsatisfiable formula, and let $\mathcal{R_F}$ be the rotation graph of $\F$. Then, there exists a witness assignment $v$ such that the traversal of $\mathcal{R_F}$ starting from $v$ visits at least one witness assignment for each clause $c \in \F$.
\end{refquote}

The possible existence of clauses that are connected only through blocked edges in the flip graph, as in Example~\ref{ex:blockedrotation}, does not disprove this conjecture.
This is because the traversal of the rotation graph as defined here may pass through assignments $a$ for which $|Unsat(\F,a)|>1$, i.e. it may perform rotation through assignments that are not an \assoc~for any clause.

\begin{lemma}\label{lemma:inr}
Let $\F$ be an unsatisfiable formula, let $a_i$ be a complete assignment to the variables of $\F$, and let $a_j$ be an \assoc~for some clause $c_j \in \F$, i.e. $a_j \in A(c_j,\F)$.
Either $a_i$ is an \assoc~for clause $c_j$, or there exists a literal $l \in \bigcup Unsat(\F,a_i)$ such that $l \in R$ where $R = a_j \setminus a_i$.
\end{lemma}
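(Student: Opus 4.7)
The plan is to argue by contrapositive/case analysis on the set $U = Unsat(\F, a_i)$. Since $\F$ is unsatisfiable and $a_i$ is complete, we have $U \neq \emptyset$. I will observe that $a_i$ fails to be an \assoc{} for $c_j$ exactly when $U \neq \{c_j\}$: the definition of \assoc{} requires $a_i$ to satisfy every clause other than $c_j$ and to falsify $c_j$, i.e., $U = \{c_j\}$.

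If we are not in the first alternative of the statement, we therefore know $U \neq \{c_j\}$, and combined with $U \neq \emptyset$ this forces the existence of some clause $c \in U$ with $c \neq c_j$. This is the only bookkeeping step that needs care, and it is the crux of the argument: I want to rule out the degenerate possibility $U = \{c_j\}$ so that I can find a clause unsatisfied by $a_i$ which $a_j$ is guaranteed to satisfy.

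Given such a $c$, the proof finishes quickly. Because $a_j \in A(c_j, \F)$ satisfies every clause in $\F \setminus \{c_j\}$, and $c \in \F \setminus \{c_j\}$, there exists a literal $l \in c$ with $l \in a_j$. On the other hand, $c \in U$ means no literal of $c$ lies in $a_i$, so in particular $l \notin a_i$. Combining these, $l \in a_j \setminus a_i = R$, while at the same time $l \in c \subseteq \bigcup Unsat(\F, a_i)$, which is exactly the literal demanded by the second alternative.

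There is no real obstacle here; the only subtlety is recognising that the contrapositive of ``$a_i$ is an \assoc{} for $c_j$'' cleanly expresses as ``some $c \in U$ differs from $c_j$'', at which point the witness literal reads off directly from the assumption that $a_j$ satisfies $\F \setminus \{c_j\}$.
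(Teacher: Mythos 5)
Your proof is correct and follows essentially the same route as the paper's: pick a clause $c \in Unsat(\F,a_i)$ with $c \neq c_j$ (which exists because $Unsat(\F,a_i)$ is non-empty and, when $a_i$ is not an \assoc{} for $c_j$, not equal to $\{c_j\}$), then extract a literal of $c$ satisfied by $a_j$ but not by $a_i$. You merely spell out more explicitly than the paper why $Unsat(\F,a_i)\neq\{c_j\}$ in the second case.
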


\begin{proof}
Let $c_i \in Unsat(\F,a_i)$ such that $c_i \neq c_j$.
Such a clause must exists because $Unsat(\F,a_i)$ is both non-empty and not equal to $\{c_j\}$.
As $a_j$ satisfies $c_i$ and $a_i$ does not, it must hold for some $l \in c_i$ that $l \in a_j$ and $l \notin a_i$, hence $l \in R$.
\end{proof}

\begin{lemma}\label{lemma:belovpath}
Let $\F$ be an unsatisfiable formula, let $a_i$ be a complete assignment to the variables of $\F$, and let $c_i$ be a clause such that $c_i \in Unsat(\F,a_i)$.
For any clause $c_j \in \F$ such that $A(c_j,\F) \neq \emptyset$ there exists a path in the rotation graph starting from the vertex corresponding to assignment $a_i$ to an \assoc~$a_j \in A(c_j,\F)$.
\end{lemma}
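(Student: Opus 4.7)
The plan is to fix an arbitrary $a_j \in A(c_j,\F)$ (which exists by the hypothesis that $A(c_j,\F) \neq \emptyset$) and to construct a path in $\mathcal{R_F}$ from $a_i$ to some \assoc~of $c_j$ by induction on the Hamming distance $d = |a_j \setminus a_i|$. Since $d$ is a non-negative integer, such an induction will terminate.

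For the base case $d = 0$ we have $a_i = a_j$, which is itself in $A(c_j,\F)$, so the zero-length path at $a_i$ suffices. For the inductive step, I would apply Lemma~\ref{lemma:inr} to the pair $a_i, a_j$. If $a_i \in A(c_j,\F)$, we are again done. Otherwise, the lemma furnishes a literal $l \in \bigcup Unsat(\F, a_i)$ with $l \in R = a_j \setminus a_i$. Since $a_i$ is complete and $l \notin a_i$ while $l$ appears in some clause of $\F$, we have $\lnot l \in a_i$, so $a_i' = \flipfun{a_i}{\lnot l}$ is well-defined, and by Definition~\ref{def:rotationgraph} the edge $(a_i, a_i')$ lies in $E_R$. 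The flip replaces $\lnot l \in a_i$ with $l$ and alters nothing else, so $a_i'$ agrees with $a_j$ on $l$ in addition to every literal on which $a_i$ already agreed with $a_j$; consequently $|a_j \setminus a_i'| = d - 1$. The inductive hypothesis then supplies a path from $a_i'$ to some \assoc~of $c_j$, and prepending the edge $(a_i, a_i')$ completes the required path from $a_i$.

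The main conceptual work has already been carried out in Lemma~\ref{lemma:inr}, so I expect no serious obstacle beyond the bookkeeping above. The hypothesis that some $c_i \in Unsat(\F, a_i)$ exists does not play a real role in the argument; its only function is to make explicit that $a_i$ need not itself be an \assoc. What genuinely matters at every step is that $Unsat(\F, \cdot)$ is non-empty on the current assignment, and that holds automatically because $\F$ is unsatisfiable and the intermediate assignments $a_i, a_i', a_i'', \ldots$ are all complete. Hence the induction is legitimate and the lemma follows.
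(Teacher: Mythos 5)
Your proof is correct and follows essentially the same route as the paper: repeatedly apply Lemma~\ref{lemma:inr} to obtain a rotation edge that flips a literal of $R = a_j \setminus a_i$, so that $|R|$ drops by one at each step and the walk must terminate in an \assoc~for $c_j$; the paper phrases this as an iterative construction with a termination argument rather than an explicit induction on $|R|$, but the content is identical. Your side remark that the hypothesis $c_i \in Unsat(\F,a_i)$ is redundant (unsatisfiability of $\F$ already guarantees $Unsat(\F,a) \neq \emptyset$ for every complete $a$) is also accurate.
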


\begin{proof}
Let $c_j$ be some clause $c_j \in \F$ such that $A(c_j,\F) \neq \emptyset$.
We will show how to construct a rotation path starting from $a_i$ that is guaranteed to end in an \assoc~for $c_j$.
For some $a_j \in A(c_j,\F)$ let $R=a_j \setminus a_i$.
The path begins at the vertex corresponding to assignment $a=a_i$.
The path is completed when we reach an assignment $a$ that is an \assoc~for $c_j$.
By combining Definition \ref{def:rotationgraph} and Lemma \ref{lemma:inr} we may observe that if $a$ is not an \assoc~for $c_j$ then there exists a literal $l \in R$ such that $(a, a') \in E_G$ for $a'=\flipfun{a}{\lnot l}$.
Hence, the path can proceed from $a$ to $a'$. 
At $a'$ we repeat the previous, i.e. either we find that $a'$ is an \assoc~for $c_j$ or we compute the next step in the path.
As one element is removed from $R$ in every step the path is guaranteed to end in an \assoc~for $c_j$.
\end{proof}

Lemma \ref{lemma:belovpath} states that starting from any complete assignment there exists a path to an \assoc~for any arbitrary critical clause. 
Hence, the conjecture in Quote \ref{quote:conjecture}~must hold.
In fact, we can even strengthen the conjecture to the following corollary.

\begin{corollary}\label{cor:belovstrong}Let $\F$ be an unsatisfiable formula, and let $\mathcal{R_F}$ be the rotation graph of $\F$. 
Starting from any complete assignment to the variables of $\F$ (any vertex in $V_R$), there exists a path in $\mathcal{R_F}$ that visits an \assoc~for every clause $c \in \F$ such that $A(c,\F) \neq \emptyset$.
\end{corollary}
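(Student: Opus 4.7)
The plan is to derive Corollary \ref{cor:belovstrong} by iterated application of Lemma \ref{lemma:belovpath}, concatenating the resulting path segments. First I would enumerate the critical clauses of $\F$ as $c_1, c_2, \ldots, c_n$ (where $n$ is the number of clauses $c \in \F$ with $A(c,\F) \neq \emptyset$). I would then construct the desired rotation path inductively: starting from the given complete assignment, call it $a_0$, I produce assignments $a_1, a_2, \ldots, a_n$ such that $a_k \in A(c_k, \F)$ and such that there is a rotation path from $a_{k-1}$ to $a_k$. Concatenating these $n$ segments then yields a single path starting at $a_0$ that visits an \assoc{} for each critical clause.

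The induction step is where I would invoke Lemma \ref{lemma:belovpath}. For the base case, $a_0$ is an arbitrary complete assignment; since $\F$ is unsatisfiable, $Unsat(\F, a_0)$ is non-empty, so I may pick any $c_i^{(0)} \in Unsat(\F, a_0)$ and apply Lemma \ref{lemma:belovpath} with target clause $c_1$ to obtain a rotation path from $a_0$ to some $a_1 \in A(c_1,\F)$. For the inductive step, assume the path has been constructed up to $a_{k-1} \in A(c_{k-1},\F)$; in particular $a_{k-1}$ is a complete assignment, and because it is an \assoc{} for $c_{k-1}$ we have $c_{k-1} \in Unsat(\F, a_{k-1})$, so the hypotheses of Lemma \ref{lemma:belovpath} are satisfied. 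Applying the lemma with target clause $c_k$ yields a rotation path from $a_{k-1}$ to some $a_k \in A(c_k,\F)$, completing the inductive step.

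Concatenating these paths gives a single walk $a_0 \to a_1 \to \cdots \to a_n$ in $\mathcal{R_F}$ that passes through an \assoc{} for every critical clause, which is exactly the statement of the corollary. Since the required path may revisit vertices, no acyclicity argument is needed, and the construction works for any starting vertex in $V_R$. The only real subtlety is noticing that every intermediate endpoint $a_k$ automatically satisfies the precondition of Lemma \ref{lemma:belovpath} (it is a complete assignment with a clause in $Unsat(\F,a_k)$, namely $c_k$ itself), which is what allows the inductive bootstrap; but this is immediate from the definition of \assoc{}, so I do not expect any real obstacle in turning this sketch into a full proof.
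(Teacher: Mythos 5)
Your proposal is correct and matches the paper's intended argument: the paper derives the corollary directly from Lemma \ref{lemma:belovpath} by exactly this kind of chaining, since each intermediate \assoc{} $a_k$ falsifies $c_k$ and hence satisfies the lemma's precondition for the next segment. Your explicit induction and the remark that the resulting walk may revisit vertices are just a careful write-up of what the paper leaves implicit.
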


Clearly, a variant of model rotation that may traverse all edges in the rotation graph (called unrestricted EMR in \cite{DBLP:journals/aicom/BelovLM12}) can reach an \assoc~for any critical clause in the input formula, starting from any complete assignment.

\section{Conclusion}
We have shown that it is possible to construct an irredundant, or even minimally unsatisfiable, formula in which some clauses are not reachable at all by model rotation.
Furthermore, we have proven a conjecture stated in \cite{DBLP:journals/aicom/BelovLM12}.

\bibliography{modelrotate}
\end{document}